\theoremstyle{plain}	
\newtheorem{thm}{Theorem}[section]
\newtheorem{lemma}[thm]{Lemma}
\newtheorem{prop}[thm]{Proposition}
\theoremstyle{definition}
\newtheorem{defn}[thm]{Definition}
\newtheorem{examp}[thm]{Example}
\begin{document}

\title{ The Ponzano-Regge model and parametric representation}
\author{\textsc{Dan Li}
				\thanks{
				Department of Mathematics, Florida State University, 
				Tallahassee, FL 32306. 
				\newline \indent ~ E-mail: dli@math.fsu.edu
			        	}
	    }

\date{}
\maketitle

\begin{abstract}
We give a parametric representation of the effective noncommutative field theory derived from a $\kappa$-deformation of the Ponzano-Regge model and define a generalized Kirchhoff polynomial with $\kappa$-correction terms, obtained in a $\kappa$-linear approximation. We then consider the corresponding graph hypersurfaces and the question of how the presence of the correction term affects their motivic nature. We look in particular at the tetrahedron graph, which is the 
basic case of relevance to quantum gravity. With the help of computer calculations, we verify that the number of points over finite fields of the corresponding hypersurface does not fit polynomials with integer coefficients, hence the hypersurface of the tetrahedron is not polynomially countable. 
This shows that the correction term can change significantly the motivic properties of
the hypersurfaces, with respect to the classical case.
\end{abstract}

\section{Introduction} \label{intro}
In perturbative quantum field theory, Feynman diagrams are widely used to calculate probability amplitudes. 
Numerical computations on Feynman diagrams show 
that multiple zeta values also occur in perturbative quantum field theory \cite{Bloch07}, \cite{BEK06}. 
For example, the wheel with $n + 1$ spokes gives us $\zeta(2n - 1)$, the Riemann zeta value of odd integers 
\cite{BK97}. 

On the other hand, parametric representation of Feynman integrals provides a powerful tool 
to study the renormalizability and other analytical properties of Feynman graphs.
In \cite{GR07}, \cite{KRTW10}, \cite{Tanasa08}, the authors derived the parametric representation 
of the noncommutative field theory defined by the Moyal product and it turned out to be 
a renormalizable field theory.

Recently, researches are also focusing on the relation between parametric representation of Feynman diagrams 
and motive theory of algebraic varieties \cite{M09}. This is inspired by the remarkable fact that 
multiple zeta values are periods of mixed Tate motives over $\mathbb{Z}$ 
\cite{Brown06}, \cite{GM99}, \cite{Terasoma02}.
In the parametric representation, each Feynman graph $\Gamma$ gives rise to
a graph hypersurface variety $X_\Gamma$, 
and its corresponding motive is called a Feynman motive. One would like to know whether
$[X_\Gamma]$ has value in the mixed Tate subring $\mathbb{Z}(\mathbb{L})$ of the Grothendieck ring of algebraic varieties and delivers a mixed Tate period.

In this paper, we consider a noncommutative field theory studied in 3d loop quantum gravity and derive its 
parametric representation in a $\kappa$-linear approximation. A modified graph polynomial is defined based on the parametric representation, whose no 
gravity limit is just the classical first Kirchhoff polynomial. The motivic property of the graph hypersurface of
the tetrahedron is discussed later using computer calculation.  

This paper is organized as follows. In section \S \ref{intro}, we first recall the noncommutative field theory defined by a new star product, whose perturbative expansion gives the same generalized Feynman graph evaluation as the transition amplitude from the Ponzano-Regge model coupled with matter, more details can be found in \cite{FL0601}, \cite{FL0602}.
 
After recalling the original definition of the Kirchhoff-Synmanzik graph polynomials in section \S \ref{ParaRep}, we derive the parametric representation of the noncommutative field theory and introduce a new graph polynomial with quantum gravity $\kappa$-correction, obtained in
a $\kappa$-linear approximation.

In order to let the reader get familiar with the concept of motives, some basic definitions and examples are given as an introduction in section \S \ref{Motive}. Then we use the hypersurface of the tetrahedron defined by the new graph polynomial as an important example to check the number of $ \mathbb{F}_q $-rational points of this hypersurface for several finite fields, it turns out that the hypersurface of the tetrahedron is not polynomially countable.  Since in the classical case
the graph hypersurface of the tetrahedron is known to be polynomially countable and
a mixed Tate motive, this shows that the presence of quantum gravity corrections can significantly alter the motivic structure of the graph hypersurfaces.

\subsection{A new noncommutative field theory} \label{subsec 1.1}

In \cite{FL0602} Freidel and Livine proposed an effective field theory to describe 3d quantum gravity by introducing a new noncommutative star product. In this subsection, we will recall some properties of this star product and the noncommutative field theory. More physical background about 3d quantum gravity will be given in the next subsection.

First define the deformation parameter $\kappa := 4\pi G$, where $G$ is the Newton's gravitational constant. Then the deficit angle $\theta := \kappa m$ is resulted from the insertion of a particle with mass $m$ into spacetime. 
Viewed as a topological defect, the particle creates a conical singularity with deficit angle $\theta$ in space time since a spinless particle is a source of curvature.

To define the star product, one needs a new group Fourier transform, introduced in \cite{FL0602}, mapping
functions on $SO(3)$ to functions on $\mathbb{R}^3$ bounded by $\kappa^{-1}$, 
\begin{equation}
  \begin{array}{ccll}
    F: & C(SO(3)) & \rightarrow & C_\kappa(\mathbb{R}^3) \\
       & \tilde{\phi}(g)  & \mapsto & \phi(X) = \int_{SO(3)} dg ~e^{\frac{i}{2\kappa}tr(Xg)}\tilde{\phi}(g) 
   \end{array}
\end{equation}
where $X \in \mathfrak{so}(3) \simeq \mathbb{R}^3$ and $tr(\cdot)$ is the usual trace function on matrices. 
Using $SO(3)$ instead of $SU(2)$ is a little easier in real calculation and such
strategy is often applied in 3d quantum gravity.
Then the deformed $\star$-product of fields can be defined by Fourier modes
\begin{equation}
\phi \star \psi (X) := \int dg_1dg_2 ~e^{\frac{i}{2\kappa}tr(Xg_1g_2)}\tilde{\phi}(g_1)\tilde{\psi}(g_2)
\end{equation}
In essential, it can be equivalently defined on phases
\begin{equation} \label{starprod}
e^{\frac{i}{2\kappa}tr(Xg_1)} \star e^{\frac{i}{2\kappa}tr(Xg_2)} = e^{\frac{i}{2\kappa}tr(Xg_1g_2)}
\end{equation}
This star product is an associative noncommutative product  \emph{inequivalent} to the famous Moyal product.
In addition, the inverse group Fourier transform is given by
\begin{equation}
\tilde{\phi}(g) = \int  _{\mathfrak{so}(3)}  \frac{d^3X}{8\pi \kappa ^3} \phi(X) \star e^{\frac{i}{2 \kappa} tr(Xg^{-1})}
\end{equation}

Recall that elements $g \in SO(3)$ can be decomposed as
\begin{equation}
g = P_4(g) + i\kappa P^i(g)\sigma_i, \quad P_4^2(g) + \kappa ^2P^i(g)P_i(g) = 1, \quad P_4(g) \geq 0
\end{equation}
where $\sigma_i,~ i= 1,2,3$ are the Pauli matrices. Or let $\vec{P}(g) = (P_1(g), P_2(g), P_3(g))$, then 
\begin{equation}
g = \sqrt{1 - \kappa ^2 |\vec{P}(g)|^2} + \kappa \vec{P}(g) \cdot i\vec{\sigma}
\end{equation}
By the restriction $P_4 \geq 0$, the $3$-vector $\vec{P}(g)$ belongs to a $3$-Ball bounded above, i.e.  $\vec{P}(g) \in B_\kappa ^3 =\{ \vec{P} : |\vec{P}| \leq \kappa^{-1} \}$.
Based on the isomorphism between $\mathfrak{so}(3) \cong \mathfrak{su}(2) = \text{span} \{ i\sigma_1, i\sigma_2, i \sigma_3 \} $ and the projection $\vec{P}(g)= \frac{1}{2i \kappa}tr(g \vec{\sigma})$, one can write the trace as an inner product 
\begin{equation}
e^{\frac{i}{2 \kappa} tr(Xg)} = e^{\frac{i}{2 \kappa} 2 X \cdot \kappa P(g)} = e^{ i X \cdot P(g)}
\end{equation}  
Thus one can write the star product \eqref{starprod} in terms of $P_i \equiv P(g_i)$:
\begin{equation} \label{newstarprod}
e^{ i X \cdot P_1} \star e^{ i X \cdot P_2} = e^{ i X \cdot (P_1 \oplus P_2)}
\end{equation}
here one introduces a new addition between $3$-vectors in $B_\kappa ^3$ as in \cite{FL0602}
\begin{equation}
P_1 \oplus P_2 := P_1 + P_2 -\kappa P_1 \times P_2
\end{equation}
with $\times$ the usual cross product. One remark is in order, notice that the commutator (Lie bracket) in $\mathfrak{su}(2)$ is the same as the cross product in $\mathbb{R}^3$, i.e. 
\begin{equation}
[\kappa P_1^i\sigma _i, \kappa P_2 ^j \sigma _j ] = 2 i \kappa ^2 P_1^i P_2^j \varepsilon_{ijk}\sigma_k, 
\quad  P_1 \times P_2 = \varepsilon_{ijk} P_1^i P_2^j 
\end{equation} 
in a sense, \eqref{starprod} or \eqref{newstarprod} is a natural generalization of the Baker-Campbell-Hausdorff formula. 

The action functional (interaction Lagrangian) was introduced in \cite{FL0602}:
\begin{equation} \label{FldthryX}
S[\phi] = \int \frac{d^3X}{8 \pi \kappa ^3} \big[ \frac{1}{2}(\partial_i\phi \star \partial_i\phi )(X) - \frac{1}{2} \frac{\sin^2m\kappa}{\kappa ^2} (\phi \star \phi)(X) + \frac{\lambda}{3!} (\phi \star \phi \star \phi)(X) \big]
\end{equation}
or in the momentum representation
\begin{equation} \label{FldthryP}
S[\phi] = \frac{1}{2}  \int dg \big(P^2(g) - \frac{\sin^2m\kappa}{\kappa ^2} \big) \tilde{\phi}(g)\tilde{\phi}(g^{-1}) + \frac{\lambda}{3!} \int \prod_{i=1}^3 dg_i \delta(g_1g_2g_3)\prod_{i=1}^3 \tilde{\phi}(g_i) 
\end{equation}
This is our effective field theory describing the dynamics of the matter field after
integrating out the gravitational sector.
This action functional is invariant under the action of a $\kappa$-deformed Poincar{\'e} group \cite{FL0602}. We then can do the perturbative expansion as before and calculate the quantum corrections to the classical theory using Feynman diagrams. 
 
\subsection{The Ponzano-Regge model } \label{subsec 1.2}

Spin foam models provide a ``path integral'' formulation of quantum gravity based on sum-over-two-complexes with representations and intertwiners. The Ponzano-Regge model is a spin foam model of 3d quantum gravity, whose representations are the unitary irreducibles of $SU(2)$, intertwiners are trivial and vertex amplitudes are the $6j$ symbols.

Fix a triangulation $\Delta$ of a $ 3d $ spacetime manifold and consider an arbitrary oriented graph $\Gamma \subset \Delta$ embedded in a surface $S$, in this paper we only consider spherical graphs $\Gamma$, for each edge $e \in E(\Gamma)$ we insert particles with deficit angle $\theta_e$. 

With matter coupled to 3d quantum gravity, the partition function of 
 the spherical graph $\Gamma$ is a state sum model (\cite{FL0602}),
 
\begin{equation}
\mathcal{I}_\Delta(\Gamma, \theta) = \sum_{ j_e  } \prod_{e \notin \Gamma} d_{j_e} \prod_{e \in \Gamma} \chi_{j_e}(h_{\theta_e}) \prod_{t} \left\{
  \begin{array}{l l l}
    j_{e_1} & j_{e_2} & j_{e_3} \\
    j_{e_4} & j_{e_5} & j_{e_6} \\
  \end{array} \right\}
\end{equation}
where the summation is over spin representations assigned to all edges $e \in \Delta$ and the product of the $6j$ symbols is over all tetrahedra $t$ in $\Delta$. In addition, $d_j = 2j +1$ is the dimension of the  spin-$j$ representation and the character $ \chi_{j}(h_{\theta}) = \sin d_j \theta / \sin \theta $. 

This amplitude $ \mathcal{I}_\Delta(\Gamma, \theta) $ is independent of the triangulation $\Delta$ and only depends on the topology of $\Gamma$, so we simply denote it as $\mathcal{I}(\Gamma, \theta)$ from now on. Taking the dual graphs and changing the variables, one obtains the spin foam amplitude in the Ponzano-Regge model.

It is known that the 6j symbol squared can be written as a group integral.
After getting rid of the $6j$ symbols, one obtains the Feynman graph evaluation:
\begin{equation}
  \mathcal{I}(\Gamma, \theta) = \int \prod_{e \in \Gamma} dG_e \Delta(\theta_e) \delta_{\theta_e}(G_e)
                                         \prod_v \delta(G_v) 
\end{equation}
In order to get this formula, one changes the variable $G_e = g_{t(e)} g_{s(e)}^{-1}$, that is,
 the group element associated to each edge 
$e \in \Gamma$ with $t(e)$, $s(e)$ being the target and source of $e$
and the ordered product of the edge group elements meeting at $v$ is defined as
$G_v = \overrightarrow{\prod}_{ e \supset v} G_e^{\epsilon_v(e)}$ where $\epsilon_v (e) = \pm 1$  
depends on whether $v$ is the target or source vertex of $e$.
 In addition, $\Delta(\theta) = \sin (\theta)$ and the distribution
$\delta_\theta(g)$ is defined by
\begin{equation*}
  \int_{SO(3)} dg \, f(g)\delta_\theta(g) = \int_{SO(3)/U(1)} dx f(x h_\theta x^{-1})
\end{equation*}
where $h_\theta = \bigl(\begin{smallmatrix}
e^{i\theta}&0\\ 0& e^{-i\theta}
\end{smallmatrix} \bigr)$ generates the subgroup $U(1)$ and $dg$, $dx$ are normalized invariant measures.

The delta function on $SO(3)$ can be expanded as
\begin{equation*}
  \delta(g) = \frac{1}{8\pi \kappa^3} \int_{\mathfrak{so}(3)} d^3X \,e^{\frac{i}{2 \kappa} tr(Xg)}
\end{equation*}
Thus the non-abelian Feynman graph evaluation reads,
\begin{equation}
   \mathcal{I}(\Gamma, \theta) =  \int \prod_{v \in \Gamma} 
\frac{d^3X_v}{8\pi \kappa ^3} \int \prod_{e \in \Gamma}dG_e\Delta(\theta_e) \delta_{\theta_e}(G_e) \prod_{v \in \Gamma} e^{\frac{i}{2 \kappa} tr(X_vG_v)}
\end{equation}
One would like to split the exponentials
$exp\{ \frac{i}{2\kappa} tr(X_v \prod_e G_e^{\epsilon_{v}(e)})\}$  and write the evaluation as a
standard Feynman amplitude. This can be done by the noncommutative star product, i.e.
we change the usual product by the star product in the last term.
\begin{equation}
   \mathcal{I}(\Gamma, \theta) =  \int \prod_{v \in \Gamma} 
\frac{d^3X_v}{8\pi \kappa ^3} \int \prod_{e \in \Gamma}dG_e\Delta(\theta_e) \delta_{\theta_e}(G_e) \prod_{v \in \Gamma} 
( \star_{e \in \partial v} e^{\frac{i}{2 \kappa} tr(X_vG_e^{\epsilon_v(e)})} )
\end{equation}

With $2i \kappa \vec{P}(g) = tr(g\vec{\sigma})$, or simply $\vec{P} \equiv \vec{P}(g)$, the distribution  $\delta_\theta(g)$ in the momentum space is
\begin{equation}
\delta_\theta(g) = \frac{\pi}{2 \kappa}\frac{\cos\theta}{ \sin \theta} \, 
 \delta ({|P|^2 - \frac{\sin^2 \theta}{\kappa ^2} }  )
\end{equation}

It turns out that these spin foam amplitudes are exactly the generalized Feynman diagram evaluations of 
the noncommutative field theory described in the previous subsection.

The perturbative expansion of the noncommutative field theory \eqref{FldthryX} or \eqref{FldthryP} gives the sum of spin foam amplitudes $\mathcal{I}(\Gamma, \theta)$ over trivalent diagrams:
\begin{equation}
 \sum_{\Gamma ~trivalent} \frac{\lambda^{V}}{S_\Gamma}\mathcal{I}(\Gamma, \theta)
\end{equation}
where $\lambda$ is a coupling constant, $ V \equiv \sharp V(\Gamma) $ is the number of vertices of $ \Gamma $ and $ S_\Gamma $ is the symmetry factor of the graph.

For a spherical graph $\Gamma$, the spin foam amplitude spells out as
\begin{equation}
\mathcal{I}(\Gamma, \theta) = \prod_{e \in \Gamma}(\frac{\cos \kappa m_e}{4\kappa }) \int \prod_{v \in \Gamma} 
\frac{d^3X_v}{8\pi \kappa ^3} \int \prod_{e \in \Gamma}dG_e \frac{i}{|P_e|^2 - \frac{\sin^2m_e\kappa}{\kappa ^2}} \prod_{v \in \Gamma} \big( \star_{e \in \partial v} e^{i \varepsilon_{ve}X_v \cdot P_e}\big)
\end{equation}
with $P_e \equiv P(G_e)$, one can find the derivation of this formula in \cite{FL0602}.  

Any element $G_e \in SU(2)$ can be written in terms of a Lie algebra element
\begin{equation}
G_e \equiv e^{\kappa P_e \cdot i\vec{\sigma}} = \cos(\kappa |\vec{P}_e|) + i \sin(\kappa |\vec{P}_e|) \vec{n}\cdot \vec{\sigma} 
\end{equation}
where $\vec{n}$ is the direction of the rotation and $\vec{\sigma}$ the Pauli matrices.
Combine this with the normalization 
\begin{equation}
 \int dG_e = \frac{\kappa ^3}{\pi ^2} \int_{B_\kappa^3} \frac{d^3P_e}{\sqrt{1- \kappa ^2 |P_e|^2}},
\end{equation}
up to some constant, one obtains the spin foam amplitude
\begin{equation} 
\mathcal{I}(\Gamma, \theta) =  \int_{\mathfrak{so}(3)} \prod_{v \in \Gamma} 
{d^3X_v}\int_{B_\kappa^3} \prod_{e \in \Gamma} \frac{ d^3P_e}{\sqrt{1- \kappa ^2 |P_e|^2}} \frac{i \cos \kappa m_e }{ |P_e|^2 - \frac{\sin^2 \kappa m_e}{ \kappa ^2} } \prod_{v \in \Gamma} \big( \star_{e \in \partial v} e^{i \varepsilon_{ve}X_v \cdot P_e}\big)
\end{equation}
where the vertices $X_v$ are integrated over $\mathfrak{so}(3) \cong \mathbb{R}^3$ and the momenta $P_e$ 
are integrated over $B_\kappa^3$, the $3$-ball bounded by ${\kappa}^{-1}$. For a trivalent graph $\Gamma$, 
the relation between the number of vertices and that of edges is $2\sharp E(\Gamma) = 3 \sharp V(\Gamma)$, 
extra $\kappa$'s have been canceled out in the above amplitude. 

An explicit computation of the amplitude  $\mathcal{I}(\Gamma, \theta)$ is impossible, so we  consider
a first order approximation in $\kappa$. In the integrand, a term like 
${\cos \kappa m_e}/{\sqrt{1- \kappa ^2 |P_e|^2}}$ does not contribute to a linear approximation since it only 
has even order terms in its Taylor expansion. Therefore, 
 the right amplitude we will use to derive the parametric representation is the following:
   \begin{equation}  \label{QGamp}
\mathcal{I}(\Gamma, \theta) =  \int_{\mathfrak{so}(3)} \prod_{v \in \Gamma} 
{d^3X_v}\int_{B_\kappa^3} \prod_{e \in \Gamma} { d^3P_e} \frac{i}{ |P_e|^2 - \frac{\sin^2 \kappa m_e}{ \kappa ^2} } \prod_{v \in \Gamma} \big( \star_{e \in \partial v} e^{i \varepsilon_{ve}X_v \cdot P_e}\big)
\end{equation}

\section{Parametric representation} \label{ParaRep}

In this section we derive the parametric representation in a $\kappa$-linear approximation
of Feynman integrals given by the spin foam amplitude \eqref{QGamp}. We begin by recalling
the parametric representation in the classical case, and then we compute the effect of the
quantum gravity corrections.

\subsection{Parametric Feynman integrals } \label{subsec 2.1}
In this subsection, following the textbook \cite{IZ06}, we will briefly recall how one gets the
 Kirchhoff-Synmanzik graph polynomials from Feynman diagrams, one can find more details with concrete examples in \cite{BD65} and \cite{IZ06}. 

Consider a quantum scalar field theory based on some interaction Lagrangian, one obtains the Feynman graphs by expanding out the exponential of the interaction term in the Feynman path integral. Choose an arbitrary Feynman graph $\Gamma$, one computes the amplitude $\mathcal{I}_\Gamma$ according to the Feynman rules. The contribution $\mathcal{I}_\Gamma$ can be written in a parametric form in terms of the Schwinger parameters.

We are using the standard notations on oriented graphs. First the incidence matrix $\{ \varepsilon_{v e} \}$ of $\Gamma$ is the $\sharp V(\Gamma) \times \sharp E(\Gamma) $ matrix defined as
\begin{equation}
\varepsilon_{v e} = \left \{
  \begin{array}{l l}
    +1 & \quad \text{if $v = t(e)$}\\
    -1 & \quad \text{if $v = s(e)$}\\
    0  & \quad \text{if $v \notin \partial (e)$}\\
  \end{array} \right.
\end{equation}
For each vertex $v \in V(\Gamma)$, the conservation of momentum gives rise to the Dirac delta function at $v$
\begin{equation}
\delta_v(k,p)=\delta(\sum_{i=1}^n \varepsilon_{v e_i} k_{e_i} + \sum_{j=1}^N \varepsilon_{v e_j} p_{e_j})
\end{equation}
where for each internal edge $e_i \in E_{int}(\Gamma)$, the associated momentum is denoted by $k_{e_i}$, similarly, $p_{e_j}$ is used for each external edge $e_j \in E_{ext}(\Gamma)$. As in \cite{IZ06}, denote the total external momentum $P_v := \sum_{j=1}^N \varepsilon_{v e_j} p_{e_j} $ at each vertex. In the above delta function, $n = \sharp E_{int}(\Gamma)$, $N = \sharp E_{ext}(\Gamma)$ and $V = \sharp V(\Gamma)$ for simplicity.  

Then the transition amplitude $ \mathcal{I}_\Gamma $ is derived from Feynman rules
\begin{equation} \label{IGamma}
\mathcal{I}_\Gamma = C(\Gamma) \int \prod_{i=1}^n \frac{d^Dk_{e_i}}{(2\pi)^D}(\frac{i}{k_{e_i}^2 - m^2 +i\epsilon}) \prod_{v =1}^V \delta_v(k_{e_i},p_{e_j}) 
\end{equation}
where $C(\Gamma)$ is some constant relevant to $\Gamma$, we can just drop such constant in computation since they can be easily restored later. As a convention, $i\epsilon$ is always absorbed into $m^2$ in the Feynman propagator.

By introducing the Schwinger parameters $\{ t_e \}$, the Feynman propagator can be expressed as:
\begin{equation}
\frac{i}{k_e^2 - m^2 } = \int_0^\infty dt_e e^{it_e(k_e^2 - m^2)}
\end{equation}
Meanwhile, write the delta function in the integral form by Fourier transform
\begin{equation}
\delta_v(k_{e_i},p_{e_j}) = \int d^Dx_v e^{ix_v \cdot (\sum_{i=1}^n \varepsilon_{v e_i} k_{e_i} + P_v ) }
\end{equation}
Put these together, then integrate over the internal momentum variables $ k_{e_i} $ and the vertex variables $ x_v $, one can rewrite the amplitude as
\begin{equation} \label{Amp}
\mathcal{I}_\Gamma 
  = \delta(\sum_v P_v) \int \prod_{i=1}^n  dt_{e_i} \frac{e^{-it_{e_i} m^2} }{(-2 i t_{e_i})^{D/2} \det(D_\Gamma(t))^{1/2} } e^{- \sum_{v,v'} [D_\Gamma^{-1}(t)]_{v,v'} P_v P_{v'}  } 
\end{equation}
where the $(V-1) \times (V-1)$ matrix is defined as 
\begin{equation}
[D_\Gamma(t)]_{v,v'} = \sum_{i=1}^n  \frac{i}{2t_{e_i}}\varepsilon_{v e_i}  \varepsilon_{v' e_i}
\end{equation}
Each element $[D_\Gamma(t)]_{v,v'}$ can be viewed as a collection of edges connecting distinct vertices $v$ and $v'$ and for such edge $e \in E_{int}(\Gamma)$ one associates the inverse of Schwinger parameter $1/t_e$. Ignoring possible negative signs, the determinant $\det(D_\Gamma(t))  $ gives all possible paths connecting all the vertices. In other words, one can define the first Kirchhoff-Synmanzik  polynomial as 
\begin{equation}
U_\Gamma(t) := \det(D_\Gamma) \prod_{i=1}^n t_{e_i}  = \sum_\mathcal{T}\prod_{e \notin \mathcal{T}} t_e 
\end{equation} 
where $\mathcal{T}$ ranges over spanning trees of $\Gamma$.

The second Kirchhoff-Synmanzik polynomial $V_\Gamma(t,P)$ involves the external momentum $P_v$ and elements of the inverse matrix $[D_\Gamma^{-1}(t)]_{v,v'}$. We don't give the details here, the interested reader can consult the textbook \cite{IZ06}. The second graph polynomial does not appear in our context, which will be explained later.

In sum, the amplitude can be expressed in the parametric form
\begin{equation}
\mathcal{I}_\Gamma(t, P) =  \delta(\sum_v P_v) \int \prod_{i=1}^n  e^{-it_{e_i} m^2} dt_{e_i} \frac{e^{- V_\Gamma(t,P) / U_\Gamma(t) } }{ U_\Gamma^{1/2}(t) } 
\end{equation}

\subsection{Graph polynomial with $\kappa$-correction} \label{subsec 2.2}
In this subsection, we will show how to get the parametric representation of the spin foam amplitude \eqref{QGamp} for a spherical graph $\Gamma$. Notice that external momentum should be gauged away \cite{FL04}, so we only integrate 
over internal momenta. This gauge fixing process kills graph polynomials involving external momentum and we are mainly interested in the generalized first Kirchhoff polynomial.

As before, we introduce the Schwinger parameters $\{ t_e \}$ for each edge $e \in E(\Gamma)$ and the new Feynman propagators are written as
\begin{equation}
\frac{i}{|P_e|^2 - \frac{\sin^2m_e\kappa}{\kappa ^2} } = \int_0^\infty dt_e e^{it_e(P_e^2 - \frac{\sin^2m_e\kappa}{\kappa ^2})}
\end{equation} 
Note that taking the no gravity limit $\kappa \rightarrow 0$ would recover the classical Feynman propagator and reduce the $\kappa$-deformed star product back to the usual product of fields, so it is reasonable to expect that the new Kirchhoff graph polynomial is the usual Kirchhoff polynomial plus some $\kappa$-related correction in quantum gravity. 

Plug the integral form of Feynman propagators into the spin foam amplitude and use the new addition in the momentum space 
\begin{equation}
\begin{array}{ll}
\mathcal{I}(\Gamma, \theta) 
 & =  \int \prod_{v} d^3X_v \int \prod_{e }d^3P_e\int_0^\infty dt_e e^{it_e(P_e^2 - \frac{\sin^2m_e\kappa}{\kappa ^2})}      e^{i \sum_v X_v \cdot (  \oplus_{e \in \partial v}\varepsilon_{ve} P_e )} \\
 & =  \int \prod_e dt_e e^{-it_e \frac{\sin^2m_e\kappa}{\kappa ^2}} \int \prod_{v} d^3X_v \int d^3P_e e^{it_eP_e^2 + i \sum_v X_v \cdot (  \oplus_{e \in \partial v} \varepsilon_{ve} P_e )} 
\end{array}
\end{equation}
Since $\Gamma$ is embedded in the triangulation $\Delta$, there are exactly three incoming or outgoing edges at each vertex.
\begin{equation}
\begin{array}{ll}
\oplus_{e \in \partial v}\varepsilon_{ve} P_e & = \varepsilon_{ve_i} P_{e_i} \oplus \varepsilon_{ve_j} P_{e_j} \oplus \varepsilon_{ve_k} P_{e_k} 
 = \varepsilon_{ve_i} P_{e_i} + \varepsilon_{ve_j} P_{e_j} + \varepsilon_{ve_k} P_{ve_k} \\ 
 & -\kappa \varepsilon_{ve_i}P_{e_i} \times \varepsilon_{ve_j} P_{e_j}  -\kappa \varepsilon_{ve_i} P_{e_i} \times \varepsilon_{ve_k} P_{e_k}  -\kappa \varepsilon_{ve_j} P_{e_j} \times \varepsilon_{ve_k} P_{e_k}
\end{array}
\end{equation}
The associative addition of $3$-momentum  has \emph{no} cyclic symmetry, so we have to fix the ordering of $P_i, P_j, P_k$ as a convention. As usual, we assume an ordering on the edges $e_i, i = 1, 2, \cdots n \equiv \sharp E(\Gamma)$, then the convention is that $i< j < k$ is always satisfied at each vertex. 

Then the last integration on the momentum space is
\begin{equation}
\begin{array}{rl}
I(t,X) = \int \prod_e d^3P_e e^{it_e \sum_e P_e \cdot P_e + i \sum_e \sum_v \varepsilon_{ve} X_v \cdot P_e
- i \kappa \sum_{e \neq e'} \sum_v \varepsilon_{ve} \varepsilon_{ve'} X_v \cdot P_e \times P_{e'} } \\
= \int \prod_e d^3P_e e^{it_e \sum_e P_e^\alpha P_{e \alpha}  - i \kappa \sum_{e < e'} \sum_v \varepsilon_{ve} \varepsilon_{ve'} \varepsilon_{\alpha \beta \gamma} X_v^\alpha P_e^\beta P_{e'}^\gamma + i \sum_e \sum_v \varepsilon_{ve} X_v^\alpha  P_{e \alpha} }
\end{array}
\end{equation}

In the matrix form
\begin{equation}
I(t,X) = \int_{B_\kappa^3 } d\vec{P} e^{-\frac{1}{2}\vec{P}^t A\vec{P} + \vec{J}\cdot \vec{P}}
\end{equation}
where $\vec{P} = ( P_{e_1}^1,  P_{e_1}^2, P_{e_1}^3, \cdots , P_{e_n}^1,  P_{e_n}^2, P_{e_n}^3 )^t$. If we define
\begin{equation}
T_e := \begin{pmatrix}
 -2it_e & 0 & 0 \\ 
  0 & -2it_e  & 0 \\
  0 & 0 & -2it_e 
 \end{pmatrix}
\end{equation}

\begin{equation}
M_{ee'} := 2i  \kappa  \begin{pmatrix}
    \sum_v \varepsilon_{ve} \varepsilon_{ve'} \varepsilon_{\alpha 11} X_v^\alpha 
     &  \sum_v \varepsilon_{ve} \varepsilon_{ve'} \varepsilon_{\alpha 12} X_v^\alpha 
       &  \sum_v \varepsilon_{ve} \varepsilon_{ve'} \varepsilon_{\alpha 13} X_v^\alpha\\
    \sum_v \varepsilon_{ve} \varepsilon_{ve'} \varepsilon_{\alpha 21} X_v^\alpha 
     &  \sum_v \varepsilon_{ve} \varepsilon_{ve'} \varepsilon_{\alpha 22} X_v^\alpha  
       &  \sum_v \varepsilon_{ve} \varepsilon_{ve'} \varepsilon_{\alpha 23} X_v^\alpha \\
    \sum_v \varepsilon_{ve} \varepsilon_{ve'} \varepsilon_{\alpha 31} X_v^\alpha 
     &  \sum_v \varepsilon_{ve} \varepsilon_{ve'} \varepsilon_{\alpha 32} X_v^\alpha 
       &  \sum_v \varepsilon_{ve} \varepsilon_{ve'} \varepsilon_{\alpha 33} X_v^\alpha
 \end{pmatrix}
\end{equation} 
 $M_{ee'}$ is a skew-symmetric matrix, then 

\begin{equation}
A = \begin{pmatrix}
T_{e_1} & M_{e_1e_2}  &   M_{e_1e_3} & \cdots & M_{e_1e_n} \\
 0          & T_{e_2} &   M_{e_2e_3} & \cdots & M_{e_2e_n} \\
\vdots      & \vdots      & \vdots       & \ddots & \vdots     \\
 0          &   0         &    0         & \cdots & T_{e_n}
 \end{pmatrix}_{3n \times 3n}
\end{equation}
$A$ is an upper triangular matrix resulted from our convention on the momenta around each vertex. 

\begin{equation}
J = ( i\sum_v \varepsilon_{ve_1}\vec{X}_v, \cdots, i\sum_v \varepsilon_{ve_n}\vec{X}_v )
\end{equation}

We take the integration over the usual $\mathbb{R}^3$   instead of $B_\kappa^3$, because the error term is obviously less than $\kappa ^3$. To see this, recall that in dimension one, the Gauss error function is defined
as
\begin{equation*}
    erf(x) = \frac{2}{\sqrt{\pi}}\int_{0}^x e^{-t^2} dt.
\end{equation*} and the complementary error function is defined as
\begin{equation*}
  erfc(x)  = 1-erf(x) = \frac{2}{\sqrt{\pi}} \int_x^{\infty} e^{-t^2}\,dt 
\end{equation*}  
For large $x$, the asymptotic expansion of the complementary error function is given by
\begin{equation*}
    \mathrm{erfc}(x) \sim \frac{e^{-x^2}}{x\sqrt{\pi}}\sum_{n=0}^\infty (-1)^n \frac{(2n-1)!!}{(2x^2)^n},\,
\end{equation*}
where $(2n-1)!!$ is the double factorial. We can generalize this formula to higher dimensions and up to a constant 
related to $\pi$, the error term in three dimension is way less than $\kappa ^3$ which will be dropped
 immediately in a linear approximation.

Hence by the Gaussian integral in three dimension, the integral over the momentum $\vec{P}$ is approximated by
\begin{equation}
I(t,X) = \frac{1}{(\det A)^{1/2}} e^{\frac{1}{2}JA^{-1}J^t} = \frac{1}{\prod_{i=1}^n (-2it_{e_i})^{3/2}}e^{\frac{1}{2}JA^{-1}J^t}
\end{equation}
Since $A$ is an upper triangular matrix, its inverse can be computed explicitly. The inverse matrix is another upper triangular matrix, its diagonal elements are just $T_{e_i}^{-1}$ and the sub-diagonal elements are $-T_{e_i}^{-1} M_{e_ie_{i+1}} T_{e_{i+1}}^{-1} $. We write the rest of the elements outside as a higher order correction matrix $\mathrm{O}(\kappa ^2)B$. We will only use linear $\kappa$-approximation and drop higher order terms in the later calculation.
\begin{equation}
T_e^{-1} = \begin{pmatrix}
 i/2t_e & 0 & 0 \\ 
  0 & i/2t_e & 0 \\
  0 & 0 & i/2t_e
 \end{pmatrix}
\end{equation}

\begin{equation}
A^{-1} = \begin{pmatrix}
T_{e_1}^{-1} & - T_{e_1}^{-1} M_{e_1e_2}T_{e_2}^{-1}  &   0  & \cdots & 0 \\
 0          & T_{e_2}^{-1} &  -T_{e_2}^{-1} M_{e_2e_3} T_{e_3}^{-1} & \cdots & 0 \\
  0         &   0          & T_{e_3}^{-1}                           & \cdots & 0 \\
\vdots      & \vdots      & \vdots       & \ddots & \vdots     \\
 0         &   0         & \cdots         & T_{e_3}^{-1}            & -T_{e_{n-1}}^{-1} M_{e_{n-1}e_n} T_{e_n}^{-1}  \\
 0          &   0         &    0         & \cdots & T_{e_n}^{-1}
 \end{pmatrix} + \mathrm{O}(\kappa ^2)B
\end{equation}

Now the spin foam amplitude looks like
\begin{equation}
\mathcal{I}(\Gamma, \theta)  = \int \prod_e dt_e  \frac{e^{-it_e \frac{\sin^2m_e\kappa}{\kappa ^2}} }{ (-2it_{e})^{3/2}}  \int \prod_{v} d^3X_v  e^{\frac{1}{2}JA^{-1}J^t}
\end{equation}
we will go further to take care of the integration over the vertices and define a new graph polynomial.

Let us look at $ \frac{1}{2}JA^{-1}J^t  $ closer

\begin{equation}
\begin{array}{ll}
\frac{1}{2}JA^{-1}J^t & = - \frac{1}{2} \sum_{vv'} \vec{X}_v (\sum_{i=1}^n \varepsilon_{ve_i}T_{e_i}^{-1}\varepsilon_{v'e_i})\vec{X}_{v'}^t +  \\
& ~~~~~~\frac{1}{2}  \sum_{vv'} \vec{X}_v (\sum_{i=1}^{n-1} \varepsilon_{ve_i}T_{e_i}^{-1}M_{e_ie_{i+1}}T_{e_{i+1}}^{-1}\varepsilon_{v'e_{i+1}})\vec{X}_{v'}^t \\
& = - \frac{1}{2} \sum_{vv'} (\sum_{i=1}^n \varepsilon_{ve_i} \frac{i}{2t_{e_i}}\varepsilon_{v'e_i})\vec{X}_v \cdot \vec{X}_{v'} + \\
& ~~~~~~2\kappa  \sum_{vv'v''} (\sum_{i=1}^{n-1} \varepsilon_{ve_i}\frac{i}{2t_{e_i}} \varepsilon_{v''e_i}
 \varepsilon_{v''e_{i+1}}\frac{i}{2t_{e_{i+1}}} \varepsilon_{v'e_{i+1}} )\det(\vec{X}_v, \vec{X}_{v'}, \vec{X}_{v''}) 
\end{array}
\end{equation}

The first part again tells us that if $e$ is some edge connecting two end points then we just associate the parameter $1/t_e$ correspondingly. The usual first Kirchhoff polynomial then can be recovered if we consider all possible spanning trees in the graph $\Gamma$. 

The second part includes a determinant $\det(\vec{X}_v, \vec{X}_{v'}, \vec{X}_{v''}) = \varepsilon_{\alpha \beta \gamma}X_v^\alpha X_{v'}^\beta X_{v''}^\gamma$, we could ignore the possible negative signs in $\varepsilon_{\alpha \beta \gamma}$ as in the textbook \cite{IZ06}. We view the determinant as a means to couple three vertices together. For two adjacent edges connecting these three vertices, we assign a term $ \frac {\kappa} {t_{e_i} t_{e_{i+1}}} $. It can be thought of as a ``quantum tunnelling'' between two vertices connected indirectly. 

If we introduce a new notation $e_i \prec e_j$, which means that edges $e_i$ and $e_j$ are adjacent in some spanning tree $\mathcal{T}$ of $\Gamma$ and $i<j$, then our definition of the generalized first Kirchhoff polynomial is given as follows.
\begin{defn}
The spin foam graph polynomial including a $\kappa$-linear quantum gravity correction can be defined as
\begin{equation} \label{newpoly}
\mathfrak{U}_\Gamma(t) = \sum_{\mathcal{T} \subset \Gamma} (\prod_{e \in \mathcal{T}} t_e + \kappa \prod_{e_i \prec e_j } t_{e_i} t_{e_j})
\end{equation}
\end{defn}
\noindent where $\mathcal{T}$ ranges over those spanning trees of $\Gamma$. Here we use parameters $t_e$ instead of $1/t_e$ because we have to respect the fact that in any spin foam model graphs are already assumed to be dual graphs. It  consists of two parts, the classical part and the correction part.

\begin{lemma}
The graph polynomial $\mathfrak{U}_\Gamma(t)$ is independent of the ordering the edges.
\end{lemma}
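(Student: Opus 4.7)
The plan is to observe that although an ordering of the edges is used both in the derivation (to fix the order of the star-products around each vertex, and hence the upper triangular form of $A$) and in the very definition of the symbol $\prec$, the ordering drops out of the final polynomial because the variables $t_e$ are ordinary commuting indeterminates. I would split the argument by treating the two summands in \eqref{newpoly} separately.

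The first summand, $\prod_{e \in \mathcal{T}} t_e$ for each spanning tree $\mathcal{T}$, is manifestly symmetric in the edges, so the sum $\sum_{\mathcal{T}} \prod_{e \in \mathcal{T}} t_e$ (which is just the classical Kirchhoff polynomial $U_\Gamma(t)$) needs no further comment. For the second summand, I would first unpack the relation $e_i \prec e_j$: it records that $e_i$ and $e_j$ are adjacent in $\mathcal{T}$ and that $i<j$ in the chosen ordering. Given a spanning tree $\mathcal{T}$, let $A(\mathcal{T})$ denote the set of unordered pairs of distinct edges in $\mathcal{T}$ sharing a common vertex; this set depends only on the tree as a combinatorial object, not on any ordering. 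For each unordered pair $\{e,e'\}\in A(\mathcal{T})$, exactly one of $e\prec e'$ or $e'\prec e$ holds, so exactly one factor $t_e t_{e'}$ (equivalently $t_{e'} t_e$, by commutativity) is contributed to the product. Hence
\begin{equation*}
\prod_{e_i \prec e_j} t_{e_i} t_{e_j} \;=\; \prod_{\{e,e'\} \in A(\mathcal{T})} t_{e} t_{e'},
\end{equation*}
and the right-hand side is intrinsic to $\mathcal{T}$.

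Summing over all spanning trees $\mathcal{T}\subset \Gamma$ and combining with the first summand yields a polynomial whose expression involves only the combinatorial data of $\Gamma$ (its spanning trees and their adjacency structure) and the commuting variables $t_e$, with no reference to any ordering. This completes the plan. There is no real obstacle here: the only subtle point is to verify that the iterated product $\prod_{e_i \prec e_j}$ really traverses each unordered adjacent pair exactly once, which is immediate from the definition of $\prec$ and the trichotomy on indices.
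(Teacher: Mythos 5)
Your proof is correct and captures the same essential point as the paper's: the relation $\prec$ merely selects one representative of each unordered pair of adjacent edges in $\mathcal{T}$, and since $t_{e}t_{e'}$ is symmetric, the correction product depends only on the adjacency structure of the tree. The paper organizes this as a case analysis on vertex valence in $\mathcal{T}$ (valence $1$, $2$, $3$, yielding $0$, $1$, $3$ adjacent pairs respectively), whereas you argue directly via the set of unordered adjacent pairs, which is marginally cleaner and does not lean on trivalence, but it is the same argument.
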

\begin{proof}
We only have to take care of the correction term, and there are two subcases due to the triangulation. 

Fix a spanning tree $\mathcal{T}$, those outermost vertices do not contribute to the $\kappa$-correction of the polynomial $\mathfrak{U}_\mathcal{T}(t)$. Among other vertices, if the valence of the vertex is $2$, it gives one pair of adjacent edges, then $t_{e_i}t_{e_j}$ will be produced as a correction term in $\mathfrak{U}_\mathcal{T}(t)$; and if the valence of the vertex is $3$, then it gives three pairs of adjacent edges, which produce a term $t_{e_i}t_{e_j} \cdot t_{e_j}t_{e_k} \cdot t_{e_i}t_{e_k} = t_{e_i}^2t_{e_j}^2t_{e_k}^2$ in the $\kappa$-correction part of $\mathfrak{U}_\mathcal{T}(t)$.

So whatever the ordering of the edges is, the polynomial can be read off directly from combinatorial information of a graph. 

\end{proof}

Let us determine the degree of $\mathfrak{U}_\mathcal{T}(t)$ by introducing a function on vertices: 
\begin{equation}
  d(v) = \left\{
  \begin{array}{l l}
    0 & \quad \text{if val($ v $) = 1}\\
    2 & \quad \text{if val($ v $) = 2}\\
    6 & \quad \text{if val($ v $) = 3}\\
  \end{array} \right.
\end{equation}
We then have  
\begin{equation}
deg(\mathfrak{U}_\mathcal{T}(t)) = \sum_{v \in \mathcal{T}} d(v) ~~\text{and}~~ deg(\mathfrak{U}_\Gamma(t)) = max_\mathcal{T} \{ deg(\mathfrak{U}_\mathcal{T}(t)) \} .
\end{equation}

\subsection{Hamiltonian action at vertices} \label{subsec 2.3}

Two dynamical processes are of special interest for us in the spin foam model. When a Hamiltonian acts on a specific vertex, it could split into a small triangle then expand through spacetime; or in the opposite direction, some small triangle could collapse into a vertex resulted from some action. In both cases, the graph remains trivalent but the triangulation changes.

Let us look at the change of the polynomial $\mathfrak{U}_\Gamma(t)$ in both cases. First suppose we have one vertex $v$ and three edges $a, b, c$ around $v$ in $\Gamma$, after excitation, $v$ splits into a triangle with edges $\alpha, \beta, \gamma$. There are two possible subcases depending on the valence of the vertex in spanning trees $\mathcal{T} \subset \Gamma $.
  
If the valence of the vertex is $3$ in some spanning tree $\mathcal{T}$, that is, assume edges $a, b, c$ are still in $\mathcal{T}$ and the original polynomial of $\mathcal{T}$ is 
\begin{equation}
\mathfrak{U}_0 = 
abcP(t)+ \kappa a^2b^2c^2 Q(t)  
\end{equation} 
where we just use $a,b, c$ instead of $t_a, t_b, t_c$ for simplicity. The polynomials $P(t), Q(t)$ collect the contributions from other parts of the graph.

To get a spanning tree $\tilde{\mathcal{T}}$ based on $\mathcal{T} \cup \alpha \beta \gamma$, we just break the loop $\alpha \beta \gamma $ and we have three possible cases $\tilde{\mathcal{T}} = \mathcal{T} \cup \alpha \beta $, $\tilde{\mathcal{T}} = \mathcal{T} \cup \alpha \gamma $ and $\tilde{\mathcal{T}} = \mathcal{T} \cup \beta \gamma $. Then the new polynomial is
\begin{equation}
 \begin{array}{l l}
 \mathfrak{U}_1 & = abcP(t)(\alpha \beta + \alpha \gamma + \beta \gamma) +  
 \kappa abc Q(t)(a \alpha^3 \beta^3  + c \alpha^3 \gamma^3 + b \beta^3 \gamma^3) \\
& = abcP(t) \nu + \kappa abc  Q(t)(a(\omega / \gamma)^3   + b(\omega /\alpha)^3 + c(\omega / \beta)^3)
 \end{array}
\end{equation}
where we denote $\nu \equiv \alpha \beta + \alpha \gamma + \beta \gamma$ and $\omega \equiv \alpha \beta \gamma $. 

In the inverse process, if the triangle $ \alpha \beta \gamma  $ collapses into a vertex $v$, it is clear that we immediately get $\mathfrak{U}_0$ from $\mathfrak{U}_1$ by the mappings:
\begin{equation}
 \nu \mapsto 1; ~~~~ a(\omega / \gamma)^3   + b(\omega /\alpha)^3 + c(\omega / \beta)^3 \mapsto abc
\end{equation}

Secondly, if the valence of the vertex is $2$ in some other spanning tree $\mathcal{T}'$, assume that passing from $\Gamma$ to 
$\mathcal{T}'$, edge $c$ is deleted and $a, b$ are still in $\mathcal{T}'$. The polynomial related to $\mathcal{T}'$ is
\begin{equation}
\mathfrak{U}_0' = abP'(t)+ \kappa ab Q'(t) 
\end{equation}

Again suppose the vertex $v$ explodes into a triangle $\alpha \beta \gamma$, and the new $\tilde{\mathcal{T}}'$ derived from $\mathcal{T}'$ is just $\tilde{\mathcal{T}}' = \tilde{\mathcal{T}} \cup \alpha \beta $, $\tilde{\mathcal{T}}' = \tilde{\mathcal{T}} \cup \alpha \gamma $ and $\tilde{\mathcal{T}}' = \tilde{\mathcal{T}} \cup \beta \gamma $. It is easy to see the polynomial related to $\tilde{\mathcal{T}}'$ is
\begin{equation}
 \begin{array} {l l}
\mathfrak{U}_1' &= abP'(t)\nu + \kappa abQ'(t)(a \alpha ^3 \beta ^2 + b \alpha ^3 \gamma ^2 + \beta ^2 \gamma ^2) \\
& = abP'(t)\nu + \kappa abQ'(t)(a \alpha (\omega / \gamma)^2 + b \alpha (\omega / \beta)^2 + (\omega / \alpha)^2)
  \end{array}
\end{equation}

Conversely, if the triangle $\alpha \beta \gamma$ shrinks to a vertex $v$ and $v$ has valence $2$ in some spanning tree, we should apply 
\begin{equation}
\nu \mapsto 1; ~~~~ a \alpha (\omega / \gamma)^2 + b \alpha (\omega / \beta)^2 + (\omega / \alpha)^2 \mapsto 1
\end{equation}
and easily get $\mathfrak{U}_0'$ from $\mathfrak{U}_1'$.

\subsection{Deletion-contraction relation} \label{subsec 2.4}

We then obtain the general relation between $\mathfrak{U}_{\Gamma}, \mathfrak{U}_{\Gamma \setminus e}$ and $\mathfrak{U}_{\Gamma / e}$.
\begin{equation}\label{delconeq}
\mathfrak{U}_{\Gamma}= \mathfrak{U}_{\Gamma \setminus e}+ e \cdot \mathfrak{U}_{\Gamma / e}
\end{equation}
where the action by $e$ means for a classical term, we just multiply it by $e$; for a correction term, we multiply it by
$\prod_{e \cap e_k \neq \emptyset} ee_k$.

We explicitly expand the polynomial with respect to $e \equiv t_n$:
\begin{equation}
 \begin{array}{ll}
\mathfrak{U}_{\Gamma} &= \mathfrak{U}_{\Gamma}|_{t_n = 0}  +  \frac{\partial \mathfrak{U}_{\Gamma} }{\partial t_n}|_{t_n =0} t_n + \frac{\partial ^2 \mathfrak{U}_{\Gamma} }{\partial t_n^2}|_{t_n =0} \frac{t_n^2}{2!} +  \frac{\partial ^3 \mathfrak{U}_{\Gamma} }{\partial t_n^3}|_{t_n =0}  \frac{t_n^3}{3!} +  \frac{\partial ^4 \mathfrak{U}_{\Gamma} }{\partial t_n^4}  \frac{t_n^4}{4!}. \\
 & = P_0 + P_1t_n + P_2t_n^2  + P_3t_n^3  + P_4t_n^4 
  \end{array}
\end{equation}
where $P_i(t_1, \cdots, t_{n-1}), i= 0, \cdots 4$ are the $i$-th derivative evaluated at $t_n =0$. The possible biggest power of $t_n$ is $4$ when $e$ connects two vertices of valence $3$ in some spanning tree. 

Notice that this is not a genuine deletion-contraction relation like the one satisfied by the
original Kirchhoff polynomial in the classical case, since here in \eqref{delconeq}, in the
term $e \cdot \mathfrak{U}_{\Gamma / e}$ the action of $e$ is not just multiplication by
the corresponding variable, but it involves also a multiplication by the more complicated term 
$\prod_{e \cap e_k \neq \emptyset} ee_k$. In particular, this means that one cannot directly apply
to this case the techniques developed in \cite{AM1101}, \cite{AM1102}, to study the graph hypersurfaces
based on the deletion-contraction property. In the next section, we use a more direct method,
based on counting points over finite fields, to investigate the properties of the graph
hypersurface of the tetrahedron graph.

\section{Graph hypersurfaces and motives} \label{Motive}

The graph hypersurfaces for the classical Kirchhoff polynomial have been
extensively studied from the point of view of motives and periods (see for
instance \cite{M09} for an overview). In general, one can investigate the
motivic properties of a variety defined over ${\mathbb Z}$ by either computing 
its class in the Grothendieck ring of varieties $K_0({\mathcal V}_{\mathbb Z})$ or 
by considering the reductions modulo primes and computing the number of
points $\# X({\mathbb F}_q)$ over finite fields. We recall here briefly some 
general facts about the Grothendieck ring and the counting of points, and
then we consider explicitly the case of the graph hypersurface of the 
tetrahedron graph, with the quantum gravity correction.

\subsection{The Grothendieck ring of varieties} \label{subsec 3.1}
In this subsection we will  briefly recall some basic facts about the Grothendieck ring of varieties, which offers an useful tool to test the motivic complexity of an algebraic variety.

\begin{defn}
Let $\mathcal{V}_k$ denote the category of algebraic varieties over a field $k$. The Grothendieck ring $K_0(\mathcal{V}_k)$  is the abelian group generated by isomorphism classes $[X]$ of varieties, with the relation 
\begin{equation}
[X] = [Y] + [X \setminus Y]
\end{equation}
for $Y \subset X$ a closed subvariety. The ring structure is defined by fiber product $[X][Y] = [X \times Y] $.
\end{defn}

For varieties defined over ${\mathbb Z}$, one can similarly consider the
Grothendieck ring $K_0(\mathcal{V}_\mathbb{Z})$ with relations as above. In general, one can define the Grothendieck ring of a symmetric monoidal
category.

Roughly speaking, $K_0(\mathcal{V}_k)$ looks like a decomposition of varieties into pieces by ``cut-and-paste''. Let $1 \equiv [\mathbb{A}^0]$ be the motive of a point
and $\mathbb{L} \equiv [\mathbb{A}^1] \in K_0(\mathcal{V}_k)$ be the Lefschetz motive. For example, $[\mathbb{P}^n] = 1+ \mathbb{L} + \cdots + \mathbb{L}^n$ corresponds to the  decomposition of $[\mathbb{P}^n]$ into cells $\mathbb{A}^0 \cup \mathbb{A}^1 \cup \cdots \cup \mathbb{A}^n $.

\medskip

For an algebraic variety, one may think of its class
 $[X]$ in the Grothendieck ring
as a ``universal Euler characteristic" of the variety $X$, see \cite{Bittner04}. 
To explain more precisely what this means, we recall the following notions.

\begin{defn}
An additive invariant of varieties is a map $\chi: \mathcal{V}_k \rightarrow R$, with values in a commutative ring $R$, satisfying \\
(1) Isomorphism invariance: $\chi(X) = \chi(Y)$ if $X \cong Y$ are isomorphic.\\
(2) Inclusion-exclusion: $\chi(X) = \chi(Y) + \chi(X \setminus Y)$ for  $Y \subset X$ closed.\\
(3) Multiplicative: $\chi(X \times Y) = \chi(X)\chi(Y)$
\end{defn}

\begin{examp} 
(1) Topological Euler characteristic: it is the prototype of additive invariants
\begin{equation}
\chi_{top} : \mathcal{V}_\mathbb{C} \rightarrow \mathbb{Z}; \quad X \mapsto \sum_{i \geq 0} (-1)^i dim H^i_c(X(\mathbb{C}), \mathbb{Q}) 
\end{equation} 
(2) Counting points over $\mathbb{F}_q$: this will be useful in the next subsection
\begin{equation}
N_q : \mathcal{V}_{\mathbb{F}_q} \rightarrow \mathbb{Z}; \quad X \mapsto \sharp X(\mathbb{F}_q) 
\end{equation} 
(3) Hodge polynomial: from Deligne's mixed Hodge theory there exists a unique additive invariant
\begin{equation}
P_{Hdg} : \mathcal{V}_\mathbb{C} \rightarrow \mathbb{Z}[u,v];\quad X \mapsto \sum_{p,q \geq 0} (-1)^{p+q} h^{p,q}(X) u^pv^q
\end{equation} 
where $ h^{p,q}(X) = dim H^p(X(\mathbb{C}), \Omega_X^q) $ is the $(p,q)$-Hodge number of $X$.
\end{examp}

Every additive invariant must factor through the Grothendieck ring and induce a ring homomorphism $\chi: K_0(\mathcal{V}_k) \rightarrow R$, which can therefore
be thought of as a ``universal Euler characteristic''.

\begin{examp} Motivic Euler characteristic

Let $\mathcal{M}_k$ denote the pseudo-abelian tensor
 category of pure Chow  motives over a field $k$. 
Objects of $\mathcal{M}_k$  are triples $(X, p, m)$, where $X$ is a smooth projective variety, $p$ is an idempotent correspondence and $m$ an integer. A morphism from $(X, p, m)$ to $(Y, q, n)$ is given by a correspondence of degree $n - m$.

Let $K_0(\mathcal{M}_k)$ denote the Grothendieck ring of the category $\mathcal{M}_k$. There exists \cite{GS96} an additive invariant $\chi :  \mathcal{V}_k \rightarrow K_0(\mathcal{M}_k)$, hence induces a ring homomorphism 
\begin{equation}
\tilde{\chi} :  K_0(\mathcal{V}_k) \rightarrow K_0(\mathcal{M}_k)
\end{equation}  

In $K_0(\mathcal{M}_k)$, we denote $ 1 = [(Spec(k), Id, 0)]$ as the motive of a point and $\mathbb{L} = [(Spec(k), Id, -1)]$ the Lefschetz motive. The Tate motive $\mathbb{Q}(1)$ is defined as the formal inverse of  $\mathbb{L}$, i.e. $\mathbb{Q}(1) = [(Spec(k), Id, 1)]$. Then the Tate twist $m \in \mathbb{Z}$ in the triple $(X, p, m)$ corresponds to the Tate motive $\mathbb{Q}(m) = \mathbb{Q}(1)^{\otimes m}$.

Taking the Tate motives into account, instead of $\tilde{\chi}$ it is natural to consider the
ring homomorphism
\begin{equation}
\chi_{mot} :  K_0(\mathcal{V}_k)[\mathbb{L}^{-1}] \rightarrow K_0(\mathcal{M}_k)
\end{equation}

\end{examp}

As mentioned before, multiple zeta functions are periods of mixed Tate motives over $\mathbb{Z}$, 
so we are interested in the
subring $\mathbb{Z}[\mathbb{L}, \mathbb{L}^{-1}] \subseteq K_0(\mathcal{M}_\mathbb{K})$ for 
a number field $\mathbb{K}$, or equivalently the subring  $\mathbb{Z}[\mathbb{L}] \subseteq K_0(\mathcal{V}_\mathbb{K})$, corresponding to mixed Tate motives generated by 
the Tate objects $\mathbb{Q}(m)$. 

We will work on singular hypersurfaces derived from generalized Feynman graph evaluations. 
It is known \cite{BB03} that
graph hypersurfaces are general enough to generate the Grothendieck ring of varieties,  but many interesting graph hypersurfaces still turn out to be mixed Tate. Although an individual graph is not mixed Tate, the sum over graphs could be mixed Tate \cite{Bloch10}.

We recall the definition of \emph{polynomially countable} based on counting $\mathbb{F}_q$-rational points of $ X_\Gamma $, where the affine hypersurface is defined as
\begin{equation}
X_\Gamma := \{ t \in \mathbb{A}^n ~|~ \mathfrak{U}_{\Gamma}(t) = 0 \}
\end{equation}
with $n = \sharp E(\Gamma)$ for any graph $\Gamma$. 
\begin{defn}
 Consider the function $F_\Gamma : q \mapsto \sharp X_\Gamma(\mathbb{F}_q)$ defined on the set of prime powers $q = p^n$. We say that the graph hypersurface $X_\Gamma$ is polynomially countable if $ F_\Gamma $ is a polynomial in $\mathbb{Z}[q]$.
\end{defn}
Actually, this definition suits for any scheme $X$ of finite type over $\mathbb{Z}$, but we only focus on graph hypersurfaces in this paper. 

With the above definition, the result of \cite{BB03} tells us that $ F_\Gamma $ is not polynomially countable for almost all graphs, and \cite{Bloch10} shows the sum of all  $ F_\Gamma $ for connected graphs with $ n $ edges  is polynomially countable for any $ n \geq 3 $.

\medskip

The relation between the two concepts, mixed Tate and polynomially countable, is a little subtle: for mixed Tate motives $ [X] $, the numbers $\sharp X(\mathbb{F}_p)$ are polynomial in $\mathbb{Z}[p]$ for almost all primes $p$.  The  ``for almost all $p$'' excludes the cases that, 
for instance, the variety may have bad reduction
at finitely many primes and that can alter the behavior of $\sharp X(\mathbb{F}_p)$. Conversely, assuming the Tate conjecture holds and knowing $\sharp X(\mathbb{F}_p)$ are polynomial in $\mathbb{Z}[p]$ for almost all $p$ would imply the motive is mixed Tate.

\subsection{The tetrahedron} \label{subsec 3.2}
As the simplest example in loop quantum gravity, we think of the tetrahedron as a triangulation of the $2$-sphere. The graph hypersurface of the tetrahedron 
 is tested over different finite fields to count the rational points, it turns out that
this hypersurface is general and complicated, which is \emph{not} polynomially countable. 

\begin{figure}

\centering
\resizebox{0.7\textwidth}{!}{%
  \includegraphics{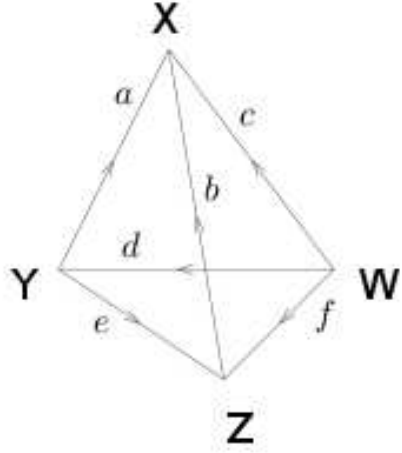}
}
\caption{Tetrahedron}
\label{fig:1}       
\end{figure}
 
\begin{examp}
Let us consider the triangulation $\Delta$ of the 2-sphere as our graph $\Gamma \equiv \Delta$ in the figure. 
Obviously, the tetrahedron has $16$ spanning trees: $abc, abd, abf, ace$, $acf, ade, adf, aef, bcd, bce, 
bde, bdf, bef, cde, cdf, cef$. 

As we can see, the spanning trees can be divided into two types: $T_1$ with one vertex of valence $3$ like $abc$ and $T_2$ with two vertices of valence $2$ like $abd$. Accordingly, there are two types of polynomials $\mathfrak{U}_{T_1}$ and $\mathfrak{U}_{T_2}$, then $\mathfrak{U}_{\Gamma} = \sum_{T_1} \mathfrak{U}_{\mathcal{T}_1} + \sum_{T_2} \mathfrak{U}_{\mathcal{T}_2} $.
It is easy to get the graph polynomial for an individual spanning tree, for instance
\begin{equation}
\mathfrak{U}_{abc} = abc + \kappa a^2b^2c^2; ~~~~ 
\mathfrak{U}_{abd} = abd + \kappa a^2bd
\end{equation}
Now the total graph polynomial is the following:
\begin{equation}
 \begin{array} {ll}
\mathfrak{U}_{\Gamma} 
 & =abc + \kappa a^2b^2c^2 + ade + \kappa a^2d^2e^2 + bef + \kappa b^2 e^2 f^2 + cdf + \kappa c^2 d^2 f^2 \\
 & + abd + \kappa a^2bd + abf + \kappa ab^2f + ace + \kappa a^2ce + acf + \kappa ac^2f \\
 & + adf + \kappa ad^2f + aef + \kappa ae^2f + bcd + \kappa bc^2d + bce + \kappa b^2ce \\
 & + bde + \kappa bde^2 + bdf + \kappa bdf^2 + cde + \kappa cd^2e + cef + \kappa cef^2
 \end{array}
\end{equation}

If we delete some edge, say $e$, then the spanning trees are reduced to $abc, abd, abf$, $acf, 
adf, bcd,  bdf, cdf$. In terms of the graph polynomial, we have $\mathfrak{U}_{\Gamma \setminus e} = \mathfrak{U}_{\Gamma}|_{e = 0}$, that is
\begin{equation}
 \begin{array}{ll}
\mathfrak{U}_{\Gamma \setminus e}
 & =abc + \kappa a^2b^2c^2  + cdf + \kappa c^2 d^2 f^2  + abd + \kappa a^2bd + abf + \kappa ab^2f \\
 & + acf + \kappa ac^2f  + adf + \kappa ad^2f + bcd + \kappa bc^2d + bdf + \kappa bdf^2
  \end{array}
\end{equation}

On the other hand, if we contract the same edge $e$ in $\Gamma$, then $Y$, $Z$ are identified as one vertex and the spanning trees are $ac, ad, af, bc,  bd, bf, cd, cf$ compared to the original $ace, ade, aef, bce, bde, bef, cde, cef$ . In other words, we set $e = 1$ in the classical part and set those adjacent pairs $ee_k = 1$ in the correction part to get the polynomial $\mathfrak{U}_{\Gamma / e}$ . 
\begin{equation}
 \begin{array}{ll}
\mathfrak{U}_{\Gamma /e}
 & = ac + \kappa ac + ad + \kappa ad  + af + \kappa af  + bc + \kappa bc \\
 & + bd + \kappa bd  + bf + \kappa bf + cd + \kappa cd + cf + \kappa cf
  \end{array}
\end{equation} 

We expand the graph polynomial as  $\mathfrak{U}_{\Gamma} = G + eF + e^2 E $, where $G = \mathfrak{U}_{\Gamma \setminus e}$ and 
\begin{equation}
 F  =  ac + ad   + af + bc  + bd  + bf + cd + cf + \kappa a^2c +  \kappa b^2c + \kappa cd^2  + \kappa cf^2
\end{equation}
\begin{equation}
E =  \kappa af + \kappa bd + \kappa a^2d^2 + \kappa b^2f^2 
\end{equation}

\end{examp}

As mentioned in the previous subsection, if an algebraic variety $X$ is mixed Tate as a motive, then its class $[X]$ in the Grothendieck ring of varieties is a  polynomial function with integral
coefficients of the Lefschetz motive $\mathbb{L} =[\mathbb{A}^1]$.

Number of points over finite fields, as an additive invariant, factors through the Grothendieck ring
\begin{equation}
N_q : K_0(\mathcal{V}_{\mathbb{F}_q}) \rightarrow \mathbb{Z}; ~~\text{with}~~ [X] \mapsto \sharp X(\mathbb{F}_q) 
\end{equation} 
Suppose $[X]= \sum_i a_i \mathbb{L}^i \in K_0(\mathcal{V}_{\mathbb{F}_q})$, then (for almost
all primes $p$) the number of
points  $\sharp X(\mathbb{F}_q)$ will be equal to $\sum_i a_i q^i$ since $\mathbb{A}^1$ has $q$ points over $\mathbb{F}_q$. 

When it is hard to compute the Grothendieck class of the graph hypersurface $[X_\Gamma] \in K_0(\mathcal{V}_{\mathbb{F}_q})$, one can instead check whether $X_\Gamma$ is polynomially countable, i.e. verify  $F_\Gamma$ for different prime powers $q$ to see if  $\sharp X_\Gamma(\mathbb{F}_q)$ fits a polynomial $ \sum_i a_i q^i$ for some integral coefficients $a_i$.

The case of the hypersurface of the tetrahedron shows that such $\kappa$-correction terms introduces significant complexity. Without the correction terms, $[X]_{classic} = \mathbb{L}^{5}+\mathbb{L}^{3}-\mathbb{L}^{2}$, showing that the graph hypersurface of
the tetrahedron is a mixed Tate motive. 

However, if we take the correction terms into account,  we obtain a very different result.

\begin{prop}
The hypersurface of the tetrahedron defined by $ \mathfrak{U}_{\Gamma} $ with the parameter $\kappa = 1$ is not polynomially countable.
\end{prop}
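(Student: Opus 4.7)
My plan is to verify the statement by direct computation of $\#X_\Gamma(\mathbb{F}_q)$ for a carefully chosen collection of small primes $q$ and then check that these values cannot be matched by any polynomial in $\mathbb{Z}[q]$. Since $X_\Gamma$ is the zero set of $\mathfrak{U}_\Gamma$ in $\mathbb{A}^6$, it is at most $5$-dimensional, so if it were polynomially countable then $F_\Gamma(q)$ would agree with a polynomial $\sum_{i=0}^{5} a_i q^i$ with $a_i \in \mathbb{Z}$. Thus the strategy reduces to: compute $F_\Gamma$ at seven or more primes, Lagrange-interpolate the first six to obtain the unique $P(q) \in \mathbb{Q}[q]$ of degree at most $5$ through those data points, and then either exhibit a non-integer coefficient of $P$ or a further prime $q_j$ with $P(q_j) \neq F_\Gamma(q_j)$.

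For the counting itself I would exploit the decomposition $\mathfrak{U}_\Gamma = G + eF + e^2 E$ already worked out in the preceding example, where $E, F, G \in \mathbb{Z}[a,b,c,d,f]$. For each of the $q^5$ tuples $(a,b,c,d,f) \in \mathbb{F}_q^5$ the number of $e \in \mathbb{F}_q$ solving $\mathfrak{U}_\Gamma = 0$ is determined in constant time by a short case analysis: it equals $q$ when $E = F = G = 0$, zero when $E = F = 0 \neq G$, exactly one when $E = 0 \neq F$, and when $E \neq 0$ it is $0$, $1$, or $2$ according to whether the discriminant $F^2 - 4EG$ is a non-square, zero, or a non-zero square in $\mathbb{F}_q$. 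Summing these contributions gives $F_\Gamma(q)$ in total time $O(q^5)$, which is very comfortable for primes up to $q \sim 30$, far more data than needed to overdetermine a degree-$5$ polynomial.

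The main obstacle is computational rather than conceptual, namely producing reliable counts quickly enough to use several primes. Two optimizations would ease the work: first, the tetrahedron has a large edge automorphism group (the $S_4$ action on vertices induces a symmetry group on $\{a,b,c,d,e,f\}$), and this acts on the counting problem, so one can enumerate only orbit representatives in $\mathbb{F}_q^5$ and multiply each count by the orbit length; second, the quadratic-in-$e$ analysis must be adapted in characteristic $2$, where $Ee^2 + Fe + G = 0$ with $E \neq 0$ becomes an Artin--Schreier equation $u^2 + u = EG/F^2$ (after $u = Ee/F$) with solvability detected by the absolute trace, but this is a self-contained minor modification. Once the counts are tabulated, the failure of any integer polynomial fit is a purely arithmetic check on a short list of integers, and this is what would close the proof.
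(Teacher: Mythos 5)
Your proposal is essentially the paper's own proof: compute $\#X_\Gamma(\mathbb{F}_q)$ by machine for enough small prime powers, take the unique degree-$\le 5$ interpolating polynomial through six of the data points, and conclude from its non-integer rational coefficients (or from a mismatch at a further prime) that no $F_\Gamma\in\mathbb{Z}[q]$ can exist --- the paper does exactly this for $q=2,3,4,5,7,11$ and obtains a fit with non-integer coefficients. Your explicit dimension bound $\dim X_\Gamma\le 5$ even makes the paper's appeal to the integrality result of Proposition 6.1 of \cite{R06} unnecessary, and the quadratic-in-$e$ case analysis (with the Artin--Schreier correction in characteristic $2$) is a sound way to organize the count.
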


\proof By computer calculation, for the fields $\mathbb{F}_p$ with $p = 2, 3, 4, 5, 7, 11$,
we find that the best polynomial fit for the counting of points $\# X_\Gamma(\mathbb{F}_p)$,
by a polynomial with rational coefficient, is given by
\begin{equation}\label{Qpolyn}
\begin{array}{llll}
\# X_\Gamma(\mathbb{F}_p)= & \displaystyle{- \frac {379511}{60480} \, p^5 } & 
  \displaystyle{ + \frac {116827}{576} \,p^4} 
& \displaystyle{ - \frac {24982339}{12096}\, p^3} \\[4mm] & 
\displaystyle{ + \frac {5588389}{576} \, p^2 } 
& \displaystyle{ -\frac {631697377}{30240} \,p } &\displaystyle{ + \frac {1188935}{72} } . \end{array}
\end{equation}

It is proved in Proposition 6.1 of \cite{R06} that if the counting function
$q \mapsto \# X(\mathbb{F}_q)$ of a variety is given by a polynomial function
with rational coefficients, then the polynomial must in fact have integer coefficients.
Thus, the fact that the best polynomial fit for the data $\# X_\Gamma(\mathbb{F}_p)$
for the first few primes $p = 2, 3, 4, 5, 7, 11$ is a polynomial with rational non-integer
coefficients implies that the counting function $q\mapsto \# X_\Gamma(\mathbb{F}_q)$
for the tetrahedron graph must in fact be non-polynomial. 
\endproof

From the relation between mixed Tate and polynomially countable, this proposition suggests that 
the hypersurface of the tetrahedron may also not be mixed Tate. It would be interesting to
see if the Feynman amplitudes of the spin foam model could then be related to explicit
periods that are not mixed Tate periods (multiple zeta values).

\nocite{*}
\bibliographystyle{hplain}
\bibliography{mybib}

\end{document}